\Crefname{algocf}{Algorithm}{Algorithms}
\crefname{algocfline}{line}{lines}
\newtheorem{definition}{Definition}
\newtheorem{theorem}{Theorem}
\newtheorem{lemma}{Lemma}
\newcommand{\veps}{\varepsilon}
\newcommand{\X}{\mathcal{X}}
\newcommand{\cI}{{\mathcal I}}
\newcommand{\kmed}{{\textsc{$k$-median}}\xspace}
\newcommand{\kmean}{{\textsc{$k$-means}}\xspace}
\newcommand{\rgath}{{\textsf{$r$-gathering}}}
\newcommand{\cO}{{\cal O}}
\newcommand{\cost}{{\mathsf{cost}}}
\newcommand{\opt}{{\mathsf{opt}}}
\newcommand{\sig}{\sigma^\star}
\newcommand{\Cstar}{C^\star}
\newcommand{\cstar}{c^\star}
\newcommand{\Gammas}{\Gamma^\star}
\newcommand{\tst}{t^\star}
\title{Universal Weak Coreset}
\author{%
  Ragesh Jaiswal and Amit Kumar
  Department of Computer Science\\
Indian Institute of Technology Delhi\\
  \texttt{\{rjaiswal, amitk\}@cse.iitd.ac.in} \\
}
\begin{document}

\maketitle

\begin{abstract}
Coresets for $k$-means and $k$-median problems yield a small summary of the data, which preserve the clustering cost with respect to any set of $k$ centers. Recently coresets have also been constructed for constrained $k$-means and $k$-median problems. However, the notion of coresets has the drawback that (i) they can only be applied in settings where the input points are allowed to have weights, and (ii) in general metric spaces, the size of the coresets can depend logarithmically on the number of points. The notion of {\em weak coresets}, which have less stringent requirements than coresets,  has been studied in the context of classical $k$-means and $k$-median problems. A weak coreset is a pair $(J,S)$ of subsets of points, where $S$ acts as a summary of the point set and $J$ as a  set of potential centers. This pair satisfies the properties  that (i) $S$ is a good summary of the data as long as the $k$ centers are chosen from $J$ only, and (ii) there is a good choice of $k$ centers in $J$ with cost close to the optimal cost.  We develop this framework, which we call {\em universal weak coresets}, for constrained clustering settings. In conjunction with recent coreset constructions for constrained settings, our designs give greater data compression, are conceptually simpler, and apply to a wide range of constrained $k$-median and $k$-means problems.

\end{abstract}

\section{Introduction}
Center-based clustering problems such as \kmed and the \kmean
are important data processing tasks. Given a set of center locations $F \subset \mathcal{X}$ and a set $X \subset \mathcal{X}$ of $n$ points in a metric $(\mathcal{X}, D)$, and a parameter $k$, the goal here is to partition the set of points into $k$ {\it clusters}, say $X_1, \ldots, X_k$,  and assign the points in each cluster to a corresponding {\it cluster center}, say $c_1, \ldots, c_k \in F$ respectively,  such that the objective $\sum_{i=1}^k \sum_{x \in X_i} D(x,c_i)^z$ is minimized. Here $z$ is a parameter which is 1 for $\kmed$ and 2 for $\kmean$.
In the past decade, there has been significant effort in designing coresets for such settings. Given a \kmed or \kmean clustering instance as above, a coreset with parameter $\varepsilon$ is a weighted subset $S$ of points in the metric space with the following property: for every set $C$ of $k$ points in the metric space, the assignment cost of $X$ to $C$ is within $(1 \pm \varepsilon)$ of that of $S$. More formally, let $w(x)$ denote the weight of a point $x \in S$, and for a point $x \in X$, let $D(x, C)$ be the distance between $x$ and the closest point in $C$. Then the following condition is satisfied for every subset $C$ of $k$ points (where $z=1$ or $z=2$ depending on the clustering problem being considered): 
\begin{align}
    \label{eq:cond}
 (1- \varepsilon) \sum_{x \in S} w(x) \cdot D(x, C)^z \leq \sum_{x \in X} D(x,C)^z \leq ( 1+ \varepsilon) \sum_{x \in S} w(x) \cdot D(x,C)^z. 
 \end{align}

The notion of coresets is useful for several reasons: (i) There are efficient algorithms for constructing small-sized coresets. 
Hence, some of the fastest known algorithms for \kmean and \kmed problems proceed in a step fashion: first, find a succinct coreset, and then run a less efficient algorithm on the coreset; (ii) in streaming settings, where one cannot afford to store the entire dataset, a coreset provides a summary of the data without compromising on the quality of clustering. Further, it is well known that coresets from two distinct data sets can be composed to yield a new coreset for the union of these two datasets. Hence, coresets are amenable to settings where data arrives over time; (iii) in scenarios where the set of $k$ centers may change over time, a coreset represents an efficient way of computing the clustering cost.

For most applications, the  requirements of a coreset may seem too strong. Indeed, a less stringent notion of {\em weak coreset} was defined by~\cite{fms07}. A weak coreset, with a parameter $\varepsilon$, for a point set $X$ as above is a pair $(J,S)$ of subsets of points in the metric space, with $S$ being a weighted subset of points, such  that (i) the condition~\eqref{eq:cond} is satisfied for all subsets $C$, where $|C| = k$ and $C \subseteq J$; and (ii) there is a subset $C$ of $k$ centers in $J$ such that the assignment cost of $X$ to $C$ is within $(1+\varepsilon)$ of the optimal clustering cost of $X$. The motivation for defining a weak coreset is that one could obtain weak coresets with better guarantees than a coreset. Indeed, this shall be the case in the problems considered in this work.

To understand why weak coresets may have better guarantees than coresets, we briefly discuss coreset construction techniques. Typical constructions use random sampling-based ideas. One starts with an initial set of $O(k)$ centers obtained by a fast approximation algorithm. For each of these centers $c \in C$, we partition the data into ``rings'' of geometrically increasing size around $c$. From each of these rings, one samples $poly(\frac{k}{\varepsilon})$ points and appropriately assigns them weights -- these weighted sampled points ``represent'' the points in the ring as far as $c$ is concerned, i.e., their assignment cost to $c$ is very close to that of the original set of points in the ring with high probability. 
These sampled points form the desired coreset. However, for the coreset property to hold, these sampled points must have near-optimal assignment cost for {\em every} set of $k$ centers. Since there are about $n^k$ possibilities for the choice of $k$ centers,  we need to sample $\left(poly(\frac{k}{\varepsilon})\cdot \log{n} \right)$ points from each ring to ensure the coreset property. In geometric settings, concepts  such as an $\varepsilon$-net and $\varepsilon$-centroid set have been used  to reduce the coreset size. 
However, in general  metric spaces, there are lower bounds (see \cite{bbh20}) suggesting that the size of the coreset  will have a dependency on $\log n$.

Weak coresets allow us to remove the dependency on $\log n$ even in general metric spaces. Since the near-optimal clustering guarantees need to hold with respect to $k$ centers chosen from $J$ only, the set of such possibilities reduces to $|J|^k$. Thus, a small-sized $J$ would typically imply a small-sized sample $S$ as well. Further, weak coresets allow us to maintain a near-optimal clustering in  streaming setting. 
Indeed, the sets $J$ and $S$ can be constructed in a streaming setting. Since the set of $k$ centers needs to be selected from $J$ only, and each can be tested with respect to $S$, we can also maintain a set of near-optimal $k$ centers in a streaming setting.

So far, our discussion has focused on the classical \kmed and \kmean settings. However, there has been significant  recent activity in the more general class of {\em constrained} clustering problems. A constrained clustering problem specifies additional conditions on a feasible partitioning of the input points into $k$ clusters. For example, the $\rgath$ problem requires that each cluster in a feasible partitioning must contain at least $r$ data points. Similarly, the well-known {\it capacitated} clustering problem specifies an upper bound on the size of each cluster.  Constrained clustering formulations can also capture various types of {\it fairness} constraints: each data point has a {\it label} assigned to it, and we may require upper or lower bounds on the number (or fraction) of points with a certain label in each cluster.
Some of these constrained problems are discussed in Section~\ref{sec:applications}.

Coresets for constrained clustering settings were recently constructed by ~\cite{bfs21,focs22}. 
Note that the standard notion of coreset is meant to preserve the cost of an assignment where points get assigned to the closest center.
This prevents using standard coresets in constrained clustering settings where a point may not necessarily get assigned to its closest center.
Recent work~\cite{bfs21,focs22} design ``assignment-preserving'' coresets that allows their use in constrained settings.
In this work, we generalize the notion of weak coresets to {\em universal weak coresets} for constrained clustering settings.
The underlying idea is the same as that of a weak coreset, i.e., we need a weighted subset $S$ of points along with a set $J$ of potential center locations. 
But now, this pair has the same guarantees as a weak coreset for {\em any} constrained clustering problem. This universal guarantee has a feature that we need not know in advance the actual constrained clustering problem being solved.

The notion of a universal weak coreset also has the following subtle application. In some specific settings, there is a distinction between known algorithms for weighted and unweighted settings. More specifically, there exist constrained clustering problems, where even if we are given a small-sized set $S$ of points, efficient algorithms for  a near-optimal set of $k$ centers with respect to $S$ are known only if the point set $S$ is unweighted.
For example, a recent development \cite{cdk23} in the $k$-median problem in the Ulam metric has broken the $2$-approximation barrier. However, their $(2-\delta)$-approximation algorithm works only
on unweighted input permutations.
In such settings, we may not be able to efficiently find a good set of centers even if $S$ is a coreset. However, when given a weak coreset $(J, S)$, we know that we need to look for centers that are subsets of $J$ only, and we can use the cost preservation property of the weighted set $S$ to find good centers from $J$. This allows us to efficiently handle such constrained clustering problems as well.


\paragraph{Breaking the coreset $\log{n}$ barrier} Since it is known \cite{bbh20} that the $\log{n}$ factor in the size of a coreset is unavoidable in general metric spaces, we must relax the notion of a coreset to break the $\log{n}$ barrier. 
Our notion of a universal weak coreset provides a framework for an appropriate relaxation that allows us to break the $\log{n}$ barrier.
More specifically, we relax the condition on the set $J$ to: there exists a subset $C$ of $k$ centers in $J$ such that the assignment cost of $X$ to $C$ is within $(\alpha + \veps)$ of the optimal clustering cost of $X$, where $\alpha$ is allowed to be $>1$. 
Moreover, the {\em universal} property on $J$ says that this $(\alpha+\veps)$-approximation holds with respect to {\em any} target clustering (not only the optimal Voronoi partitioning).
The property on the set $S$ remains unchanged.
We call this an $\alpha$-universal weak coreset.
Note that a $\alpha$-universal weak coreset helps to find an $\alpha$-approximate solution.
The relaxation from $(1+\veps)$ to $(\alpha+\veps)$ guarantee is not a significant compromise if $\alpha$ is the best approximation guarantee known for a constrained clustering problem, which is indeed true for several constrained problems we discuss in this paper.
On the other hand, this relaxation allows the universal weak coreset size, $(|J| + |S|)$, to be $poly(\frac{k}{\veps})$, i.e., independent of $n$.
Our main results include constructions of such universal weak coresets:

\underline{Informal result}: {\it We give a construction of a $3$-universal weak coreset for the $k$-median and a $9$-universal weak coreset for the $k$-means problem in general metric spaces (the $3, 9$ factors improve to $2, 4$ for the special case when $X \subseteq F$).
We also give a $1$-universal weak coreset construction for $k$-median/means in the Euclidean setting. All these have size $poly(\frac{k}{\veps})$.}

As applications, we discuss how to obtain an $\alpha$ approximate solution from an $\alpha$-universal weak coreset, for arbitrary versions of constrained clustering problems, such as balanced clustering, fair clustering, $l$-diversity clustering, and potentially many more.

\paragraph{Related work}
Two decades ago, coresets were introduced \cite{harpeled04} primarily as a tool to design streaming algorithms for the $k$-median/means problems. 
Subsequently, it became an independent computational object for study, and some remarkable techniques and results \cite{hk07,chen09,fms07,ls10,fl11,fss20} have been obtained.
More recent developments \cite{sw18,hjlw18,bbc19,hv20,bbv20,bjkw21,css21} have been in getting improvements on the size bounds of coresets in various metrics. 
Recent developments have also been on coresets for constrained settings \cite{bfs21,focs22}, which are most relevant to our work.

\paragraph{Organization} In the next section, we define the notion of a universal weak coreset. In Section~\ref{sec:construction}, we will see constructions of such coresets. Finally, in Section~\ref{sec:applications}, we will see applications of universal weak coresets in finding approximate solutions to several constrained clustering problems.

\section{Universal Weak Coreset}

We define the notion of universal weak coreset formally in this section. 
%
We shall use $[k]$ to denote the set $\{1, ..., k\}$. 
In the discussion, `with high probability' should be interpreted as with a probability of at least 0.99.
Let $\X$ denote a metric space with metric $D$ defined on it. We now formally define a constrained clustering problem. 
While describing an instance $\cI$, we would like to separate out the actual constraints on feasible clusterings and the underlying clustering instance. A clustering instance $\cI'$ is given by a tuple $(X, F, w, k)$, where $X$ is the set of all input points with a corresponding weight function $w: X \rightarrow \mathbb{R}^+$, 
a set $F$ of potential center locations and a value $k$, which denotes the number of clusters. 

A constrained clustering instance consists of a tuple $(X, F, w,k)$ as above and a $k$-tuple $\Gamma=(t_1, \ldots, t_k)$ of non-negative real values such that $\sum_{i \in [k]} t_i = \sum_{x \in X} w(x)$ . Intuitively, the value $t_i$ denotes the total weight of the points assigned to the $i^{th}$ cluster. However, a point in $X$ can be partially assigned to several clusters; but the sum of these partial weight assignments should equal $w(x)$. In other words, an assignment is given by a mapping $\sigma: X \times [k] \rightarrow \mathbb{R}^+$, such that $\sum_{i \in [k]} \sigma(x,i) = w(x)$ for each $x \in X$. 
An assignment $\sigma$ is said to be {\em consistent} with $\Gamma=(t_1, \ldots, t_k)$, denoted $\sigma \sim \Gamma$, if 
$\sum_{x \in X} \sigma(x,i) = t_i$ for all $i \in [k].$ Thus, the $k$-tuple $\Gamma$ denotes how the weights of the points in $X$ gets partitioned into the $k$ clusters. 
Given an instance $\cI=((X,F,w,k),\Gamma)$ of constrained clustering, and a set $C \subseteq F$ of $k$ centers, the clustering cost, denoted $\cost_z(X,w,C, \Gamma)$, where $z=1$ or $2$, defined as follows: 
\[
\cost_z(X, w, C, \Gamma) \equiv \min_{\sigma \sim \Gamma}{ \left\{\sum_{i=1}^{k} \sum_{x \in X} \sigma(x, i) \cdot D(x, c_i)^z \right\}}.
\]
Now the optimal cost of clustering over the choice of centers $C$ is denoted as follows:
$$\opt_z(X, w, \Gamma) \equiv \min_{C: C \subseteq F, |C|=k}{\{\cost_z(X, w, C, \Gamma)\}}.$$
We are now ready to define the notion of weak coresets. In the following, the parameter $z$ shall be either $1$ or 2.  We shall also fix a parameter $\varepsilon > 0$ for rest of the discussion. This should be treated as an arbitrary small but positive constant. 
\begin{definition}[$\alpha$-Universal Weak Coreset] \label{def:core}
 Given a clustering  instance $\cI=(X, F, w,k)$, an $\alpha$-universal weak coreset is a  tuple $(J, S, v)$, where $J \subset F$ is a subset of potential center locations, and $S \subset X$ is a weighted subset of points with weight function $v: S \rightarrow \mathbb{R}^{+}$ such that 
any assignment $\sigma: X\times [k] \rightarrow \mathbb{R}^+$:
 the following condition holds with high probability: 
\begin{enumerate}
\item[(A)] $J$ contains a subset $(c_1, ..., c_k)$ with 
\[
\sum_{i=1}^{k} \sum_{x \in X} \sigma(x, i) \cdot D(x, c_i)^z \leq (\alpha + \veps) \cdot \sum_{i=1}^{k} \sum_{x \in X} \sigma(x, i) \cdot D(x, c_i^*)^z.
\]
where $(c_1^*, ..., c_k^*)$ is the optimal center set that respects $\sigma$, i.e., $(c_1^*, ..., c_k^*) = \arg\min_{(s_1, ..., s_k)}{\left\{ \sum_{i=1}^{k} \sum_{x \in X} \sigma(x, i) \cdot D(x, s_i)^z\right\}}$
\item[(B)] For every subset $C \subseteq J$, $|C| =k$ and every $\Gamma$:
\[\cost_z(X, w, C, \Gamma) \in (1\pm \varepsilon) \cdot \cost_z(S, v, C, \Gamma).\]
\end{enumerate}
The size of a weak coreset $(J,S,v)$ is defined as  $(|J| + |S|)$.
\end{definition}

An $\alpha$-universal weak coreset allows us to summarise the dataset so that this summary is sufficient to obtain an $(\alpha+\varepsilon)$-approximate solution to any constrained version  of the clustering problem in time that is dependent only on the size ($|J| + |S|$) of the coreset. 
This could lead to a fast approximation algorithms if the universal coreset construction is efficient and its size is independent of the data size, $n = |X| + |F|$. In the next section, we will see that this is indeed possible.
Let us see a canonical approximation algorithm that finds an $(\alpha+\veps)$-approximate solution from an $\alpha$-universal weak coreset.

\begin{theorem}\label{thm:meta}
Consider a clustering instance $(X, F, w,k)$ and let $(J, S, v)$ be an $\alpha$-universal weak coreset for it. 
Given a constrained clustering instance $((X,F,w,k), \Gamma)$, there is  algorithm $\mathcal{A}$ that, with high probability, outputs  a set of $k$ centers $C \subseteq F$ such that:
\[
\cost_z(X, w, C, \Gamma) \leq (\alpha + \varepsilon) \cdot \opt_z(X, w, \Gamma).
\]
Moreover, the running time of $\mathcal{A}$ is $\tilde{O}(|J|^k \cdot |S|)$.
\end{theorem}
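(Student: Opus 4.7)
The plan is a brute-force enumeration over $J$ followed by coreset-side evaluation. Concretely, I would have $\A$ iterate over every size-$k$ subset $C \subseteq J$ (at most $|J|^k$ of them) and, for each, compute $\cost_z(S,v,C,\Gamma)$ by solving the min-cost transportation LP on the bipartite instance with supplies $v(x)$ at $x\in S$, demands $(t_1,\dots,t_k)$ at the $k$ centers of $C$, and edge costs $D(x,c_i)^z$. The algorithm then returns a minimiser $C^\star$ of this coreset-side cost.

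For correctness I would fix a pair $(\sig,\{c_i^{opt}\})$ witnessing $\opt_z(X,w,\Gamma)$, so $\sig \sim \Gamma$ and $\opt_z(X,w,\Gamma)=\sum_{i,x}\sig(x,i)D(x,c_i^{opt})^z$. Applying property (A) of Definition~\ref{def:core} to the single assignment $\sig$ gives, with high probability, centers $C^{good}=\{c_1^{good},\dots,c_k^{good}\}\subseteq J$ with
\[
\sum_{i,x}\sig(x,i)\,D(x,c_i^{good})^z \;\le\;(\alpha+\veps)\sum_{i,x}\sig(x,i)\,D(x,c_i^*)^z,
\]
where the $c_i^*$ are the unconstrained best centers for the fixed assignment $\sig$. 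Since $\{c_i^{opt}\}$ is feasible in the minimisation defining the $c_i^*$, the right-hand side is at most $(\alpha+\veps)\,\opt_z(X,w,\Gamma)$; and because $\sig \sim \Gamma$, the left-hand side upper-bounds $\cost_z(X,w,C^{good},\Gamma)$. Together this yields $\cost_z(X,w,C^{good},\Gamma)\le(\alpha+\veps)\,\opt_z(X,w,\Gamma)$. Next I would invoke property (B) with the constraint $\Gamma$ both at $C^{good}$ and at the output $C^\star$, and combine with the coreset-side optimality $\cost_z(S,v,C^\star,\Gamma)\le\cost_z(S,v,C^{good},\Gamma)$ to chain
\[
\cost_z(X,w,C^\star,\Gamma)\;\le\;\frac{1}{(1-\veps)^2}\,\cost_z(X,w,C^{good},\Gamma)\;\le\;\frac{\alpha+\veps}{(1-\veps)^2}\,\opt_z(X,w,\Gamma),
\]
which becomes $(\alpha+\veps)\,\opt_z(X,w,\Gamma)$ after rescaling the input $\veps$ by a small absolute constant before constructing the coreset.

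For the running time, each of the $|J|^k$ candidates triggers a single min-cost transportation computation on an instance of size $O(|S|\,k)$, solvable in $\tilde O(|S|)$ time once $\mathrm{poly}(k)$ and $\log$ factors are absorbed into $\tilde O$, giving the claimed $\tilde O(|J|^k\cdot|S|)$ bound. The step I expect to be the main subtlety is aligning property (A) with the constrained optimum: property (A) compares the $J$-side cost to the \emph{unconstrained} best centers for a given assignment, whereas the theorem demands a bound against $\opt_z(X,w,\Gamma)$. Instantiating property (A) at $\sigma=\sig$ makes this inequality go the right way, because re-optimising the centers in $F$ after fixing the $\Gamma$-consistent assignment $\sig$ can only decrease cost. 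A welcome side effect is that we invoke property (A) for a single $\sigma$, so no union bound is required and the ``with high probability'' clause of Definition~\ref{def:core} passes through cleanly.
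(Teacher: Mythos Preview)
Your proposal is correct and follows essentially the same approach as the paper: enumerate all $k$-tuples of centers from $J$, evaluate each on $(S,v)$ via a min-cost flow/transportation computation, and output the minimiser; correctness then follows from properties (A) and (B) exactly as you chain them. Your write-up is in fact more detailed than the paper's own proof sketch—one small wording fix is that you should enumerate \emph{ordered} $k$-tuples from $J$ (since $\Gamma$ indexes clusters), which is already reflected in your $|J|^k$ count and in the rest of your argument.
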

\begin{proof}
The algorithm tries out all ordered subsets $C:=(c_1, \ldots, c_k)$ of size $k$ of $J$. For each such subset, one can find an assignment $\sigma: S \times [k] \rightarrow \mathbb{R}^+$ that is consistent with $\Gamma$  and minimizes $\cost_z(S,v,C, \Gamma)$. This can be done by setting up a suitable min-cost flow network. Thus, we can efficiently compute $\cost_z(X,w,C,\Gamma)$. Finally, we output the subset $C$ such that $\cost_z(X,w,C,\Gamma)$ is minimized. The desired result follows easily from the properties of a universal weak coreset. 
\end{proof}

Note that if the coreset construction is efficient ({\it i.e., polynomial in $n, k, 1/\veps$}) and the coreset size is $f(k, \veps)$,  for some function $f$, then the above theorem gives an FPT ({\it Fixed Parameter Tractable}) approximation algorithm with parameter $k$. This means that as long as $k$ is a fixed constant, the algorithm runs in polynomial time. We now give efficient constructions of universal weak coresets.

\section{Universal Weak Coreset Construction}\label{sec:construction}
We now give an algorithm for constructing coresets. Recall that 
there are two sets in the definition of a universal weak coreset: $J$ and $S$. The set $S$ represents the input points  that need to be clustered, whereas the set $J$ acts as the representative of the potential center locations.
We will construct these two sets independently using two known lines of results.

{\bf Constructing $J$:}
Let us first see the construction of the set $J$ that follows from developments on $D^z$-sampling based algorithms for the {\em list-$k$-median/means} problems~\cite{gjk20,bgjk20} -- the idea of list $\kmed$ or $\kmean$  gave a unified way of handling a large class of constrained clustering problems.

The following problem is addressed by  \cite{gjk20,bgjk20}. Given a clustering instance $(X,F,w,k)$, and a parameter $\varepsilon > 0$, output a list ${\cal L} = \{C_1, \ldots, C_\ell\}$, where each $C_i \subseteq F$ is a set of $k$ centers such that the following property is satisfied: for any partition $P_1, \ldots, P_k$ of the point set, 
there exists a set of $k$ centers  $C=(c_1', \ldots, c_k') \in {\cal L}$ such that 
$$ \sum_{i \in [k]} \sum_{x \in P_i} D(x,c_i')^z \leq (\alpha + \varepsilon) 
\sum_{i \in [k]} \sum_{x \in P_i} D(x,c_i^*)^z,$$ 
where $c_i^*$ is the optimal center for $P_i$.
The goal is to minimize the size $\ell$ of $\cal L$ (the above property needs to hold with high probability). 
To solve this problem, \cite{gjk20,bgjk20} find a suitable  set $M \subseteq F$ (using a $D^z$-sampling technique) and then iterate over all subsets of size $k$ of $M$ to generate the list $\cal L$. 
We state the relevant result from \cite{gjk20} that we shall use to construct the set $J$.\footnote{Note that the result in this particular form is not explicitly stated in \cite{gjk20} since this was not the primary goal of that work. In particular, the result stated here is a weighted version of the results in \cite{gjk20}. However, it follows  from their analysis. }

\begin{theorem}[\cite{gjk20}]\label{thm:gjk}
There is a randomised algorithm, that outputs a set $M \subset F$ of size $\left( poly(\frac{k}{\varepsilon})\right)$ with the following property: For any assignment $\sigma: X \times [k] \rightarrow \mathbb{R}^+$, with high probability, there is a set of $k$ centers $C:= \{c_1, \ldots, c_k\} \in M$ such that:
\[
\sum_{i=1}^{k} \sum_{x \in X} \sigma(x, i) \cdot D(x, c_i)^z \leq (3^z + \veps) \cdot \sum_{i=1}^{k} \sum_{x \in X} \sigma(x, i)\cdot D(x, c_i^*)^z,
\]
where $(c_1^*, ..., c_k^*)$ is the optimal set of centers that respect $\sigma$, i.e.,  $(c_1^*, ..., c_k^*) = \arg \min_{(s_1, ..., s_k)}{\sum_{i=1}^{k} \sum_{x \in X} \sigma(x, i)\cdot D(x, s_i)^z}$.
The running time of this algorithm is $O(n|M|)$. 
\end{theorem}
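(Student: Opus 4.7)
The plan is to construct $M$ by iterative $D^z$-sampling and then prove by a potential-function argument that $M$ contains a good $k$-tuple for every fixed assignment $\sigma$. The algorithm starts with a single seed center in $M$ (chosen proportionally to $w$), and then repeats the following for $T = \mathrm{poly}(k/\varepsilon)$ rounds: sample a point $x \in X$ with probability proportional to $w(x) \cdot D(x, M)^z$, and add $x$ (or its nearest facility in $F$) to $M$. The running time $O(n|M|)$ follows because each round requires only a linear scan of $X$ to update the distances to $M$.

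For the analysis, I would fix an assignment $\sigma$ with optimal centers $c_1^\star, \ldots, c_k^\star$ respecting $\sigma$, and let $X_i$ denote the weighted cluster induced by $\sigma$ with weight $\sigma(x,i)$ at point $x$. Call cluster $i$ \emph{captured} if $M$ already contains some $c$ with
\[
\sum_{x} \sigma(x,i)\, D(x,c)^z \leq (3^z + \varepsilon) \sum_{x} \sigma(x,i)\, D(x,c_i^\star)^z.
\]
The key technical lemma is that a $D^z$-sample drawn \emph{from a single uncaptured cluster} $X_i$ produces a capturing center with constant probability. This is the standard $D^z$-sampling triangle-inequality calculation: one shows that sampling a point $p$ from $X_i$ with probability proportional to $\sigma(p,i) D(p, M)^z$ yields $\mathbb{E}[\mathrm{cost}_z(X_i, p)] \leq 3^z \cdot \mathrm{cost}_z(X_i, c_i^\star)$ (two applications of triangle inequality, one to bound $D(x,p)$ against $D(x,c_i^\star)+D(c_i^\star,p)$, and one more going through $D^z$-sampling), so Markov's inequality with slack $\varepsilon$ finishes.

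Given this one-step lemma, I would run a potential-function argument that tracks the set $U$ of uncaptured clusters. In each round, if the uncaptured clusters together contribute more than an $\varepsilon/k$ fraction of $\mathrm{cost}_z(X, w, M, \sigma)$, then a $D^z$-sample lands inside some uncaptured cluster with probability at least $\Omega(\varepsilon/k)$, and conditional on this, the one-step lemma captures it with constant probability. Thus each round decreases $|U|$ in expectation by $\Omega(\varepsilon/k^2)$, so after $T = \tilde{O}(k^3/\varepsilon)$ rounds either all clusters are captured or the residual uncaptured clusters together contribute at most $\varepsilon \cdot \mathrm{opt}$, which is absorbed into the $\varepsilon$ slack. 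Enumerating all $k$-subsets of $M$ in Theorem~\ref{thm:meta}-style then yields the desired tuple $(c_1,\ldots,c_k) \subseteq M$. Repeating the whole process $O(\log(1/\delta))$ times and unioning the outputs boosts the success probability to $1-\delta$.

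The main obstacle is the accounting for clusters whose current cost contribution is too small to be sampled with non-trivial probability. These ``small'' clusters cannot be captured by direct sampling, so the argument must show that \emph{either} their contribution is already tiny relative to $\mathrm{opt}_z(X,w,\sigma)$ (and hence can be charged to the $\varepsilon$ slack), \emph{or} some currently-chosen center in $M$ is already close enough to serve them within the $(3^z+\varepsilon)$ guarantee. Making this dichotomy quantitative, while keeping $|M|$ at $\mathrm{poly}(k/\varepsilon)$ and the failure probability below $0.01$ via a union bound over the $k$ cluster-capture events, is the delicate step.
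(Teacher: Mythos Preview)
The paper does not prove this theorem; it is quoted from \cite{gjk20} (with a footnote noting that the weighted form stated here is not explicit there but follows from their analysis). So there is no in-paper proof to compare against.

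That said, your sketch is in the right spirit---\cite{gjk20} does build $M$ via $D^z$-sampling and argues that for any fixed target partition the sampled set contains a good $k$-tuple---but two technical points in your outline would need tightening before it goes through. First, your ``one-step lemma'' asserts that a single $D^z$-sample from an uncaptured cluster $X_i$ satisfies $\mathbb{E}[\cost_z(X_i,p)]\le 3^z\cdot\cost_z(X_i,c_i^\star)$; the two triangle-inequality sketch you give is the $k$-means++ calculation, which for $z=2$ yields the constant $8$, not $9=3^2$, and for general $z$ does not land exactly on $3^z$. The $3^z$ in the statement is the discrete (centers-in-$F$) constant, and it arises only after you account for the ``snap $x$ to its nearest facility'' step you mention in passing---that snapping introduces an extra triangle-inequality leg and is where the constant actually moves from $2^z$-type to $3^z$-type. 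Second, the sampling distribution is proportional to $w(x)\,D(x,M)^z$, not $\sigma(x,i)\,D(x,M)^z$; for fractional $\sigma$ these differ, and conditioning on ``the sample lies in cluster $i$'' is not literally well-defined. The way \cite{gjk20} handles both issues is to sample a \emph{batch} of $\mathrm{poly}(k/\varepsilon)$ points per round and argue that the batch contains, for each heavy uncaptured cluster, some point whose snapped facility is $(3^z+\varepsilon)$-good; this sidesteps the per-sample expectation calculation you rely on and makes the discrete snapping analysis cleaner. Your potential-function bookkeeping for the light clusters is essentially what they do as well.
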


It is not difficult to see that the set $M$ in the above theorem is precisely the set $J$ that we need for a $3^z$-universal weak coreset. 
For the special case of $X \subseteq F$ (i.e., a center can be located at any of the input points),~\cite{gjk20} gave an improved guarantee of $(2^z+\veps)$ instead of $(3^z + \veps)$. 
So, the same improvement transfers to the universal weak coreset.
For the Euclidean metric, \cite{bgjk20} used sampling ideas similar to \cite{gjk20} to give a result similar to~\Cref{thm:gjk}. However, the approximation guarantee here is $(1+\veps)$ and the size of $M$ is $(\frac{k}{\veps})^{O(\frac{1}{\veps})}$. This gives a $1$-universal weak coreset property for the set $J$ in the Euclidean setting.

{\bf Constructing the set $S$:} Now we show how to construct the desired set $S$. 
Here, we build on the recent work of~\cite{focs22} in designing ``assignment-preserving coresets'' for \kmed and \kmean. Their construction works by partitioning the points 
into $\tilde{O}(k^2e^{-z})$ ``rings'' and then finding suitable (weighted) representatives from each of these rings. The latter procedure requires clever random sampling techniques. The selected representatives $S_R$ from a particular ring $R$ satisfy the following condition: for any set of $k$ centers $C$, the assignment costs to $C$ of all the points in the ring $R$ is close to that of $S_R$. But one would like this property to hold for all $n^k$ possible ways of choosing $C$. Thus, one needs to apply union bound over all such possibilities, which results in a multiplicative factor of $k\log{n}$ in the representative size from each ring.\footnote{Coreset construction previous to \cite{focs22} (e.g., \cite{bfs21}) had a $\log{n}$ factor also coming from the number of rings. This bottleneck was removed in \cite{focs22}.}
As mentioned earlier, one hopes to avoid this barrier by constructing weak coresets. 
Here, number of possible choices for the set $C$ reduces to $|J|^k$ instead of $n^k$. So, the $k\log{n}$ factor needed in the size of the sampled set $S_R$ from each ring $R$ gets replaced by $k \log{|J|}$. The trade-off is that instead of the classical coreset allowing a $(1+\veps)$-approximate solution, the $\alpha$-universal weak coreset only allows a $(\alpha + \veps)$-approximate solution. This is not a significant compromise
if $\alpha$ is the best approximation guarantee known for a constrained clustering problem, which happens to be true for several cases.
We now formally state the result from \cite{focs22} that we shall use to construct the set $S$ for our universal weak coreset.

\begin{theorem}[\cite{focs22}]
Consider a clustering instance $(X,F,w,k)$ and a parameter $\delta \in (0,1)$. 
There is a randomised algorithm to construct a weighted set $T \subset X$ of size $O\left(poly(\frac{k}{\veps}) \cdot \log{\frac{1}{\delta}} \right)$ with weight function $v: T \rightarrow \mathbb{R}^+$ that satisfies the following property: 
given a set $C$ of $k$ centers, 
\[
\forall \Gamma, \cost_z(X, w, C, \Gamma) \in (1\pm \veps) \cdot \cost_z(T, v, C, \Gamma),
\]
holds with probability at least $(1-\delta)$. 
Moreover, the running time of the algorithm is $O(n|T|)$.
\end{theorem}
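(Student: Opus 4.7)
The plan is to follow the ring decomposition plus importance sampling template that underlies assignment-preserving coresets. First, I would run a standard constant-factor (bicriteria) approximation to obtain a set $A \subset F$ with $|A| = O(k)$ centers such that $\sum_{x \in X} w(x) \cdot D(x,A)^z = O(\opt_z(X,w,\Gamma))$ for some (in fact every) feasible $\Gamma$; the paragraph preceding the theorem already appeals to this exact step from~\cite{focs22}. Then I would partition $X$ into rings $R_{c,j} := \{x \in X : c = \arg\min_{a \in A} D(x,a),\ D(x,c) \in [2^j \Delta,\, 2^{j+1}\Delta)\}$, where $\Delta$ is chosen relative to the approximate optimum so that only $O(\log(k/\veps))$ scales $j$ carry non-negligible mass (points much closer than $\Delta/\mathrm{poly}(k/\veps)$ can be handled by a ``snap'' to their nearest center in $A$, and the cost of points much further than $\Delta \cdot \mathrm{poly}(k/\veps)$ can be charged directly against $\opt$). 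This keeps the total number of rings to $\tilde O(k^2/\veps^z)$.

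From each ring $R = R_{c,j}$, I would draw a uniform sample of size $s = O(\mathrm{poly}(k/\veps) \cdot \log(1/\delta))$ and assign each sampled point $x$ weight $v(x) = W(R)/s$ where $W(R) := \sum_{x \in R} w(x)$. The union of these samples over all rings forms the set $T$ with weight function $v$; by construction $|T| = O(\mathrm{poly}(k/\veps) \cdot \log(1/\delta))$, and each sample is drawn in $O(|R|)$ time, giving total time $O(n |T|)$ after the preliminary $O(n k)$ cost to compute the rings.

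The main argument for the $(1\pm\veps)$ guarantee fixes the set $C$ of $k$ centers (so the probability in the statement is over the sampling of $T$ only) and uses the following ring-local invariant: for every ring $R_{c,j}$ and every center $c_i \in C$, every $x \in R_{c,j}$ satisfies $D(x, c_i)^z \in (1\pm\veps) \cdot \phi(R_{c,j}, c_i) + \eta_x$, where $\phi$ is a ring-level ``representative cost'' and the residuals $\eta_x$ are bounded enough to absorb into a global $\veps$-fraction of $\opt$ (this is the standard triangle-inequality analysis of rings: either $D(c, c_i)$ dominates $2^j$ or vice-versa, and the borderline case is handled by the $\mathrm{poly}(k/\veps)$ factor in $s$). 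Given this, the cost $\cost_z(X, w, C, \Gamma)$ can be rewritten as the optimal value of a min-cost flow on a bipartite graph with one node per ring on one side, the $k$ centers on the other side, edge costs $\phi(R, c_i)$, supplies $W(R)$, and demands $\Gamma$. Uniform sampling inside each ring preserves $W(R)$ up to a $(1\pm\veps)$ multiplicative factor for a fixed $R$ with probability $1 - \delta/(\text{number of rings})$ by a Chernoff bound, and a union bound over the poly-many rings gives the statement simultaneously for that single flow network -- and hence, crucially, simultaneously for all $\Gamma$, since a multiplicative perturbation of the supplies at each ring-node perturbs the min-cost flow value by the same $(1\pm\veps)$ factor uniformly in the demand vector $\Gamma$.

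The main obstacle is precisely this ``uniform in $\Gamma$'' reduction: a naive approach would union-bound over a discretization of the simplex of feasible $\Gamma$ and pay a $\log n$ factor, which is exactly what we are trying to avoid. The payoff of the ring decomposition is that, once $C$ is fixed, the cost function $\Gamma \mapsto \cost_z(X,w,C,\Gamma)$ depends on $X$ only through the ring mass vector $(W(R))_R$, so preserving this finite-dimensional vector automatically handles all $\Gamma$ at once. The remaining technical care goes into showing that the residual errors $\eta_x$ cumulate to at most $\veps \cdot \opt_z(X,w,\Gamma)$ for every $\Gamma$ -- this requires choosing the ring truncation thresholds with respect to the bicriteria bound on $A$ and charging the slack to the constant-factor approximation, exactly as in~\cite{focs22}.
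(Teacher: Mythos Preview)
The paper does not prove this statement; it is quoted as a black box from~\cite{focs22}, so there is no in-paper argument to compare against beyond the informal paragraph that precedes the theorem. Your outline matches that description at the architecture level of ``constant-factor seed, ring decomposition, per-ring sampling.''

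That said, your concentration step has a genuine gap. You write that uniform sampling ``preserves $W(R)$ up to a $(1\pm\veps)$ multiplicative factor \ldots\ by a Chernoff bound,'' but with your own weighting $v(x)=W(R)/s$ the sampled mass of each ring equals $W(R)$ \emph{deterministically}; the randomness is doing no work where you place it. What actually needs to concentrate is, for each ring $R$ and each center $c_i\in C$, the sampled cost $\sum_{x\in S_R} v(x)\,D(x,c_i)^z$ around $\sum_{x\in R} w(x)\,D(x,c_i)^z$ --- and in fact more than that, because the $\Gamma$-constrained optimum may split a ring's mass across several centers, routing the points cheap with respect to $c_i$ toward $c_i$ and the expensive ones elsewhere. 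Consequently your key claim that ``$\Gamma\mapsto\cost_z(X,w,C,\Gamma)$ depends on $X$ only through $(W(R))_R$'' is false precisely in the borderline regime $D(c,c_i)\approx 2^j$ that you flag: inside such a ring the per-point costs to $c_i$ range over a constant-factor interval, so neither the ring mass (trivially preserved) nor even the ring-total cost to each center (the natural fix, which you do not actually argue) controls all $\Gamma$ simultaneously. The argument in~\cite{focs22} handles this with a finer event --- concentration uniformly over the polytope of fractional ring-to-center assignments, not just preservation of supplies. Your sketch correctly locates the bottleneck (the ``uniform in $\Gamma$'' step) but the reduction you propose does not close it.
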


The construction of the desired set $S$ using the above result follows from a direct application of union bound over the choice of $k$ center sets in the set $J$.

\begin{theorem}
\label{thm:consS}
Consider a clustering instance $(X,F,w,k)$. 
There is a randomized algorithm for constructing a weighted set $S \subset X$ of size $\left(poly(\frac{k}{\veps}) \cdot \log{|J|}\right)$ with weight function $v: S \rightarrow \mathbb{R}^+$ such that  
the following event happens with high probability: for every choice of $C$ centers from $J$, $|C|=k$, and every $\Gamma$:
\[
\cost_z(X, w, C, \Gamma) \in (1\pm \veps) \cdot \cost_z(S, v, C, \Gamma).
\]
Moreover, the running time of the algorithm is $O(n|S|)$. 
\end{theorem}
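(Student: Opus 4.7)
\textbf{Proof proposal for Theorem~\ref{thm:consS}.} The plan is a straightforward union bound on top of the assignment-preserving coreset result from \cite{focs22} stated just above. Concretely, I would invoke that theorem with the failure probability $\delta$ chosen small enough so that a single sampled set $T$ simultaneously works for every candidate center set $C \subseteq J$ of size $k$.

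First, I set $\delta := \frac{1}{100\,|J|^k}$ and run the algorithm of the previous theorem on $(X,F,w,k)$ with this choice of $\delta$. This produces a weighted set $T \subseteq X$ together with a weight function $v: T \rightarrow \mathbb{R}^+$ of size
\[
|T| = O\!\left(\mathrm{poly}(k/\veps) \cdot \log(1/\delta)\right) = O\!\left(\mathrm{poly}(k/\veps) \cdot k \log|J|\right),
\]
which we fold into $\mathrm{poly}(k/\veps) \cdot \log|J|$. I take $S := T$ with the induced weight function. Observe that although there are infinitely many possible tuples $\Gamma$, the guarantee of the previous theorem is already quantified over all $\Gamma$ simultaneously for a fixed $C$; hence the only quantifier we still need to handle by union bound is the one over $C$.

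Second, I apply a union bound over the subsets $C \subseteq J$ with $|C| = k$. There are at most $\binom{|J|}{k} \leq |J|^k$ such subsets, so the probability that $S$ fails to satisfy
\[
\forall\,\Gamma,\quad \cost_z(X, w, C, \Gamma) \in (1\pm \veps) \cdot \cost_z(S, v, C, \Gamma)
\]
for at least one such $C$ is at most $|J|^k \cdot \delta \le \tfrac{1}{100}$. Thus the desired simultaneous guarantee holds with probability at least $0.99$, i.e., with high probability in the sense used in the paper.

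Finally, for the running time, the previous theorem produces $T$ in time $O(n|T|)$, and since $S = T$ this is $O(n|S|)$ as required. There is essentially no technical obstacle: the only subtle point to verify is that the construction of \cite{focs22} outputs a single weighted set $T$ whose quality bound holds with probability $(1-\delta)$ simultaneously across all $\Gamma$ for each fixed $C$, so that $\delta$ is the only quantity that must be shrunk to absorb the union bound over $C \subseteq J$.
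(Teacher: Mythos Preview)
Your proposal is correct and matches the paper's own argument, which simply states that the construction of $S$ follows from a direct application of the union bound over the at most $|J|^k$ choices of $k$ centers from $J$. Your observation that the quantifier over $\Gamma$ is already handled by the cited theorem for each fixed $C$, so only the union bound over $C$ remains, is exactly the point.
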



The following results now follow from~\Cref{thm:consS} and the discussion after~\Cref{thm:gjk}:

\begin{theorem}[Main theorem: Metric \kmed]\label{thm:kmed3}
There is a $3$-universal weak coreset $(J, S,v)$ of size $\left( poly(\frac{k}{\veps})\right)$ for \kmed (i.e., $z=1$) objective in general metric spaces. The time to construct such a coreset is $O(n \cdot (|J| + |S|))$. 
\end{theorem}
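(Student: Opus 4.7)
The plan is to construct $J$ and $S$ essentially independently using the two building blocks already established, and then combine the guarantees via a union bound.

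First, I would apply \Cref{thm:gjk} with $z=1$ to the clustering instance $(X,F,w,k)$. This yields a set $M \subseteq F$ of size $\mathrm{poly}(k/\veps)$ such that, for any assignment $\sigma$, with high probability there exist $k$ centers in $M$ achieving cost at most $(3+\veps)$ times the optimum constrained to respect $\sigma$. I would take $J := M$, so property (A) of \Cref{def:core} is satisfied with $\alpha = 3^1 = 3$. The construction takes time $O(n|M|) = O(n|J|)$.

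Next, having fixed $J$, I would apply \Cref{thm:consS} to the same instance $(X,F,w,k)$. This yields a weighted set $S \subseteq X$ with weight function $v : S \to \mathbb{R}^+$ of size $\mathrm{poly}(k/\veps) \cdot \log |J|$, satisfying property (B), i.e., for every $C \subseteq J$ with $|C|=k$ and every $\Gamma$, $\cost_1(X,w,C,\Gamma) \in (1\pm\veps)\cdot\cost_1(S,v,C,\Gamma)$, with high probability. Because $|J| = \mathrm{poly}(k/\veps)$, we have $\log |J| = O(\log(k/\veps))$, so $|S|$ is still $\mathrm{poly}(k/\veps)$. This construction takes time $O(n|S|)$.

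Combining the two steps, the total construction time is $O(n(|J|+|S|))$ and the total size is $|J|+|S| = \mathrm{poly}(k/\veps)$, as claimed. The $3$-universal weak coreset property holds with high probability by a union bound over the two high-probability events in the constructions of $J$ and $S$ (the constant success probability $0.99$ can be adjusted by standard boosting so that the union bound still yields high probability overall).

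The main subtlety I would want to verify is that the set $J$ produced by \Cref{thm:gjk} and the set $S$ produced by \Cref{thm:consS} are indeed compatible in the sense required by \Cref{def:core}: property (A) is asserted for arbitrary assignments $\sigma$ (not only ones induced by a Voronoi partition of a center set in $J$), and property (B) is asserted uniformly over center sets $C \subseteq J$ and weight vectors $\Gamma$. Since both guarantees already hold in the required ``for all $\sigma$'' / ``for all $(C,\Gamma)$'' form (the former is given directly by the statement of \Cref{thm:gjk}, and the latter by the union bound already performed in \Cref{thm:consS}), no further coupling between $J$ and $S$ is required, and the two constructions can be run in sequence with independent randomness.
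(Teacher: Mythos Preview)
Your proposal is correct and matches the paper's own argument, which simply states that \Cref{thm:kmed3} follows from \Cref{thm:consS} together with \Cref{thm:gjk} (specialized to $z=1$). Your explicit bookkeeping of the size bound $|S| = \mathrm{poly}(k/\veps)\cdot \log|J| = \mathrm{poly}(k/\veps)$ and of the union bound over the two randomized constructions is exactly the intended reasoning.
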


For the special case $X \subseteq F$, the guarantee in the above theorem improves from $3$ to $2$.

\begin{theorem}[Main theorem: Metric \kmean]\label{thm:kmean9}
There is a $9$-universal weak coreset $(J, S,v)$ of size $\left( poly(\frac{k}{\veps}) \right)$ for \kmean (i.e., $z=2$) objective in general metric spaces. The time to construct such a coreset is $O(n \cdot (|J| + |S|))$.
\end{theorem}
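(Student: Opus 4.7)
The plan is to produce the pair $(J, S, v)$ by combining the two constructions already assembled in this section: the set $J$ coming from~\Cref{thm:gjk} (instantiated with $z=2$) and the weighted set $(S,v)$ coming from~\Cref{thm:consS}. Each of the two guarantees will supply one of the two conditions in~\Cref{def:core}, and the constructions can be carried out independently.

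First I would run the $D^2$-sampling algorithm of~\Cref{thm:gjk} with $z=2$ and precision parameter $\veps'$ (a small constant multiple of $\veps$), which returns a set $M \subseteq F$ of size $\mathrm{poly}(k/\veps)$ in time $O(n|M|)$. Setting $J := M$, the conclusion of~\Cref{thm:gjk} is exactly property (A) of~\Cref{def:core} with $\alpha = 3^{z} = 9$, after absorbing $\veps'$ into $\veps$.

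Next I would feed the clustering instance $(X,F,w,k)$ together with the set $J$ just constructed into the coreset algorithm underlying~\Cref{thm:consS}, using precision $\veps'$, which outputs a weighted subset $(S,v)$ of size $\mathrm{poly}(k/\veps) \cdot \log|J|$. Since $|J| = \mathrm{poly}(k/\veps)$, we have $\log|J| = O(\log(k/\veps))$, which is absorbed into $\mathrm{poly}(k/\veps)$; the running time is $O(n|S|)$, so the total construction time is $O(n(|J|+|S|))$. The output of~\Cref{thm:consS} is precisely property (B).

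The step that most deserves care --- and the main obstacle --- is the coupling of the two random events.~\Cref{thm:gjk} is a high-probability statement over the randomness of $J$ that holds uniformly over all assignments $\sigma$, while~\Cref{thm:consS} is a high-probability statement over the randomness of $S$ that must hold for every $k$-subset $C \subseteq J$. To make both properties hold simultaneously, I would use independent randomness in the two subroutines, condition on the realization of $J$ before invoking~\Cref{thm:consS} (so that $|J|$ is a fixed input to the size bound), and boost each individual failure probability to at most $0.005$ so that a union bound yields overall success probability at least $0.99$. Finally, the $\veps'$ parameters fed into the two subroutines need to be rescaled by a constant to ensure the combined slack fits inside the claimed $\veps$; this only affects hidden constants inside $\mathrm{poly}(k/\veps)$.
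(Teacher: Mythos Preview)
Your proposal is correct and matches the paper's approach exactly: the paper states that the theorem follows directly from~\Cref{thm:consS} together with the discussion after~\Cref{thm:gjk}, i.e., take $J$ to be the set $M$ from~\Cref{thm:gjk} with $z=2$ (giving property~(A) with $\alpha=3^2=9$) and take $(S,v)$ from~\Cref{thm:consS} (giving property~(B)), with sizes combining to $\mathrm{poly}(k/\veps)$. Your additional remarks on coupling the two random events and rescaling~$\veps$ are more explicit than the paper, which leaves these routine details implicit.
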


For the special case $X \subseteq F$, the guarantee in the above theorem improves from $9$ to $4$.

\begin{theorem}[Main theorem: Euclidean \kmed/\kmean]\label{thm:kmm1}
There is a $1$-universal weak coreset of size $\left( poly(\frac{k}{\veps})\right)$ for \kmed and \kmean objectives in the Euclidean metric. The time to construct such a coreset is $O \left(n (k/\veps)^{O(\frac{1}{\veps})}\right)$.
\end{theorem}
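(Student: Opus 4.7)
The plan is to mirror the metric constructions in Theorems~\ref{thm:kmed3} and~\ref{thm:kmean9}, but swap the metric $D^z$-sampling result of~\cite{gjk20} (Theorem~\ref{thm:gjk}) for its Euclidean analogue from~\cite{bgjk20}, which is already summarized in the discussion immediately following Theorem~\ref{thm:gjk}. That Euclidean analogue produces, with high probability, a set $M \subseteq F$ of size $(k/\veps)^{O(1/\veps)}$ such that for any assignment $\sigma$, some $k$-tuple of centers drawn from $M$ achieves cost within a $(1+\veps)$ factor of the optimal center set respecting $\sigma$, and this holds for both $z=1$ and $z=2$. Taking $J := M$ immediately gives condition~(A) of Definition~\ref{def:core} with $\alpha = 1$, and the construction runs in time $O(n|J|) = O(n(k/\veps)^{O(1/\veps)})$.

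Next, I will apply Theorem~\ref{thm:consS} to this $J$ to obtain the weighted set $S$ that satisfies condition~(B). The key quantitative check is that the resulting size $|S| = poly(k/\veps) \cdot \log|J|$ does not blow up: since $|J| = (k/\veps)^{O(1/\veps)}$, we have $\log|J| = O((1/\veps)\log(k/\veps))$, and hence $|S| = poly(k/\veps)$. The construction of $S$ runs in time $O(n|S|)$, which is dominated by the construction of $J$. Combining, $(J, S, v)$ is a $1$-universal weak coreset of total size $|J| + |S| = (k/\veps)^{O(1/\veps)}$, which is $poly(k/\veps)$ in the sense used throughout the paper (polynomial in $k/\veps$ for any fixed $\veps$), and the total construction time is $O(n(k/\veps)^{O(1/\veps)})$, as required.

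I do not foresee a substantive obstacle: the theorem is essentially an assembly of two previously cited results in the same pattern as the metric Theorems~\ref{thm:kmed3} and~\ref{thm:kmean9}. The only bookkeeping item is to ensure that the high-probability guarantees of the two building blocks combine; this is handled by running each with success probability at least $0.995$ and taking a union bound, while the union bound over $k$-subsets of $J$ needed inside the construction of $S$ is already absorbed into the $\log|J|$ factor stated in Theorem~\ref{thm:consS}. The one place where the Euclidean argument differs qualitatively from the metric one, and the only subtlety worth highlighting, is that $|J|$ and hence the running time acquire the $(k/\veps)^{O(1/\veps)}$ dependence inherited directly from the Euclidean $D^z$-sampling bound of~\cite{bgjk20}; this is precisely the price paid for strengthening the approximation factor from $3^z$ to~$1$.
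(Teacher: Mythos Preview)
Your proposal is correct and follows exactly the approach the paper takes: assemble $J$ from the Euclidean $D^z$-sampling result of~\cite{bgjk20} (as summarized after Theorem~\ref{thm:gjk}) and $S$ from Theorem~\ref{thm:consS}, with the same size and running-time bookkeeping. You even make explicit the points the paper leaves implicit, namely that $\log|J| = O((1/\veps)\log(k/\veps))$ keeps $|S|$ polynomial and that the stated $poly(k/\veps)$ size is to be read with $\veps$ fixed given $|J| = (k/\veps)^{O(1/\veps)}$.
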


In the following section, we see applications of the results above.

\section{Applications}\label{sec:applications}
In this section, apply the  universal weak coreset constructions to solve constrained versions of the \kmed and \kmean problems. 
As mentioned earlier, we can view a universal weak coreset as a compression of the original dataset. 
There are two ways of applying universal  weak coresets~: (i) Execute a known algorithm for the specific constrained problem on the compressed instance $(J, S, v, k)$ instead of $(F, X,w, k)$, and (ii) Use the meta-algorithm defined in Theorem~\ref{thm:meta} with appropriate modifications. We now discuss some specific examples.

\subsection{Clustering with size-based constraints}
We consider constrained clustering problems where, besides optimizing the objective function, 
there are  constraints on the size of the clusters. For example, the $r$-gathering problem requires  a lower bound of $r$ on the size of every cluster. Similarly, the capacitated clustering problem has an upper bound on cluster size. These constraints try to capture a ``balance'' property that limits the variance in the cluster sizes. 
We can model such size-constrained problems using the balanced \kmed or \kmean problem. 
Here,  in addition to $(F, X,w, k)$, an instance also specifies tuples $(l_1, ..., l_k)$ and  $(u_1, ..., u_k)$; where $l_i$ and $u_i$ are the lower and upper bound on the total weight of the $i^{th}$ cluster, respectively.
For example, the $r$-gathering problem is obtained by setting $l_i=r, u_i = \infty$ for all $i \in [k]$. 
Let us see how the $3$-universal weak coreset for \kmed objective from Theorem~\ref{thm:kmed3} helps obtain a $3$-approximation algorithm  for any instance of the balanced \kmed problem (the extension to balanced \kmean is analogous).

\begin{theorem}
\label{thm:applicationbalanced}
Let $(J, S,v)$ be a $3$-universal weak coreset for an input instance $(F, X, w, 
  k)$.
Let $\cI = (F,X,w,k,(l_1, ..., l_k), (u_1, ..., u_k))$ be an instance of the balanced \kmed problem.  Then there is a randomized algorithm $\mathcal{A}$, that with high probability, outputs a $k$ center set $C$ that is a $(3+\veps)$-approximate solution for $\cI$. The running time of $\mathcal{A}$ is $\tilde{O}(|J|^k \cdot |S|)$.\footnote{The overall running time of the approximation algorithm, including the time to construct the universal weak coreset is $n \cdot poly(\frac{k}{\veps}) + \left(\frac{k}{\veps}\right)^{O(k)}$.}
\end{theorem}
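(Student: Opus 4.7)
The plan is to follow the meta-algorithm of Theorem~\ref{thm:meta}, adapted so that the balance constraints are enforced inside the inner subroutine. Algorithm $\mathcal{A}$ enumerates every ordered $k$-tuple $C=(c_1,\ldots,c_k)$ drawn from $J$; for each such $C$ it solves a min-cost flow on a bipartite network whose supply nodes are the points of $S$ (supply $v(s)$), whose demand nodes are $1,\ldots,k$ with demand window $[l_i,u_i]$, and whose edge cost is $D(s,c_i)$. The resulting minimum, denoted $\alpha_C$, is the cheapest balance-feasible assignment of $S$ to $C$; let $\Gamma_C$ be the induced demand vector. $\mathcal{A}$ outputs the tuple $C^{out}$ minimising $\alpha_C$.

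For correctness, fix an optimal balanced solution $C^{opt}$ realised by an assignment $\sigma^{opt}$ consistent with a balance-feasible tuple $\Gamma^{opt}$. Applying property (A) of Definition~\ref{def:core} to $\sigma^{opt}$ produces a $k$-tuple $\hat{C}\subseteq J$ for which
\[
\sum_{i=1}^{k}\sum_{x\in X} \sigma^{opt}(x,i)\, D(x,\hat c_i) \;\le\; (3+\veps)\sum_{i=1}^{k}\sum_{x\in X} \sigma^{opt}(x,i)\, D(x,c_i^{*}) \;\le\; (3+\veps)\,\opt,
\]
since the unconstrained optimal centers $c_i^{*}$ for $\sigma^{opt}$ are at least as cheap as $C^{opt}$, whose cost is the balanced optimum $\opt$. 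Because $\sigma^{opt}$ is consistent with $\Gamma^{opt}$, the left-hand side upper bounds $\cost(X,w,\hat C,\Gamma^{opt})$ (with $z{=}1$), and property (B) applied to $(\hat C,\Gamma^{opt})$ gives $\cost(S,v,\hat C,\Gamma^{opt}) \le \tfrac{3+\veps}{1-\veps}\,\opt$. Since $\sum_{s\in S} v(s)=\sum_{x\in X}w(x)$, the tuple $\Gamma^{opt}$ remains balance-feasible for $S$, so it is a candidate in the flow defining $\alpha_{\hat C}$, whence $\alpha_{\hat C}\le \tfrac{3+\veps}{1-\veps}\,\opt$.

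Let $\Gamma^{out}$ be the demand vector attaining $\alpha_{C^{out}}$. By minimality of $C^{out}$ we have $\alpha_{C^{out}}\le \alpha_{\hat C}$, and property (B) applied in the other direction to $(C^{out},\Gamma^{out})$ yields
\[
\cost(X,w,C^{out},\Gamma^{out}) \;\le\; (1+\veps)\,\alpha_{C^{out}} \;\le\; \frac{(1+\veps)(3+\veps)}{1-\veps}\,\opt \;=\; (3+O(\veps))\,\opt.
\]
Since $\Gamma^{out}$ satisfies the balance constraints, this upper bounds the true cost of $C^{out}$ on the balanced instance $\cI$. Rescaling $\veps$ yields the claimed $(3+\veps)$-approximation. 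The running time is $\tilde{O}(|J|^k)$ enumerations, each processed by a min-cost flow on a network with $O(|S|+k)$ nodes and $O(k|S|)$ edges, producing the stated $\tilde{O}(|J|^k\cdot|S|)$ bound.

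The main obstacle is bookkeeping rather than conceptual: one must apply (A) to precisely $\sigma^{opt}$, propagate the \emph{same} $\Gamma^{opt}$ into (B), and verify that feasibility of the balance constraints transfers between $X$ and $S$ (which holds because the coreset preserves the total weight). The high-probability clauses combine cleanly, because (A) is invoked only once whereas (B) is required to hold uniformly over all $C\subseteq J$ of size $k$ and all $\Gamma$, which is exactly the guarantee already delivered by Theorem~\ref{thm:consS}.
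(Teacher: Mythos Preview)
Your proposal is correct and follows essentially the same route as the paper: enumerate all ordered $k$-tuples from $J$, solve a balance-feasible min-cost flow on $(S,v)$ for each, and output the best; then use property~(A) on the optimal assignment $\sigma^{opt}$ to exhibit a good $\hat C\subseteq J$, and property~(B) in both directions to transfer costs between $X$ and $S$. The paper's appendix packages the last step as a lemma proved by contradiction, whereas you give the direct chain of inequalities, but the logic is identical; your explicit remark that feasibility of $\Gamma$ transfers between $X$ and $S$ because the coreset preserves total weight is a point the paper uses implicitly.
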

\begin{proof}
Consider an optimal solution to $\cI$, and let $\sigma(x,i)$ denote the weight of point $x$ assigned to cluster $i$. 
Define  $\Gamma := (\sum_x \sigma(x,1), ..., \sum_x \sigma(x,k))$. 
From the $3$-universal weak coreset property, there is a subset $C$ of $J$, $|C|=k$, such that 
$\cost_1(X,w,C,\Gamma) \leq (3+\varepsilon) \opt(X,w,\Gamma).$
Moreover,  the set $S$ has the property that $\cost_1(X, w, C, \Gamma)$ and $\cost_1(S,v,C, \Gamma)$ are within $(1+\varepsilon)$ factor of each other. 
This implies  that if we try all possible choice of $k$ centers $C$ from $J$ and, for each such $C$, find $\opt_1(S,v,C, \Gamma)$, then we can compute $\opt_1(X,w, \Gamma)$ within $(3+\varepsilon)$ approximation factor. 

The remaining issue is how to compute $\opt_1(S,v,C, \Gamma)$ for a given choice of $C.$ We do not know $\Gamma$ here, but we can find the tuple $\Gamma'$ for which $\opt_1(S,v,C, \Gamma')$ is minimized. Indeed, we can set up a minimum cost flow network where we would like to assign the points fractionally to the centers in $C$, and for each center in $C$, we can assign lower and upper bound (i.e., $l_i$ and $u_i$) for the amount of weight assigned to it. Solving this min-cost flow problem shall yield the optimal choice of $\Gamma'$. 
Minimizing over $C \subseteq J, |C|=k$, we can find $\opt_1(X,w,\Gamma)$.

Combining the above ideas yield a $(3+\varepsilon)$ approximation algorithm. 
\end{proof}

The $(9+\veps)$-approximation for arbitrary balanced versions of the \kmean problem in general metrics follows on similar lines using the $9$-universal weak coreset from Theorem~\ref{thm:kmean9}.
Similarly, a $(1+\veps)$-approximation for arbitrary balanced versions of the \kmed and \kmean problems in Euclidean metrics can be obtained using $1$-universal weak coreset from Theorem~\ref{thm:kmm1}.

\subsection{Fair clustering and other labeled versions}

We now consider constrained clustering problems where points have labels, i.e., 
suppose we are given a label set $L := \{1, ..., m\}$, and each point $x$ has a label $\ell(x) \in L$ associated with it. 
Labels can capture disparate scenarios where every client may be part of multiple (overlapping) groups ({\it e.g., groups based on gender, ethnicity, age, etc.}). Every unique combination of groups gets assigned a different label, so $m$ denotes the number of distinct combinations of groups to which a point can belong.
For a label $j \in L$, let $X_j$ denote the set of points that are assigned label $j$. Consider a clustering instance $(X,F,w,k, \ell),$ where we have also incorporated the label mapping. The corresponding fair clustering instance $\cI$ is specified by an additional list of 
of $k$ pairs, namely,  $(\alpha_1, \beta_1), ..., (\alpha_k, \beta_k)$. 
An optimal solution needs to find a set of $k$ centers, and an  assignment $\sigma: X \times [k] \rightarrow \mathbb{R}^+$ for all $x \in X$,  such that: 
\begin{enumerate}
\item[(i)] For every $j \in [m]$ and $i \in [k]$, $\frac{\sum_{x \in X_i} \sigma(x, i)}{\sum_{x \in X} \sigma(x, i)} \in [\alpha_i, \beta_i]$, i.e., for every group, the fraction of weights assigned to the $i^{th}$ cluster is in the range $[\alpha_i, \beta_i]$. This captures various fairness notions for points that  may belong to a particular group.  
\item[(ii)] The assignment cost, i.e., 
$\sum_{i=1}^{k} \sum_{x \in X} \sigma(x, i) \cdot D^z(x, c_i)$, is minimized. 
\end{enumerate}


Our definition of universal weak coreset is for the case $m=1$, i.e., points have only one label, which may be interpreted as the unlabeled case.
However, we need to extend the notion of universal weak sets  to multi-label settings. 

Towards this, we recall that the set $J$ constructed in the previous section (Theorems~\ref{thm:kmed3} and \ref{thm:kmean9}) satisfies the following property: 
for any assignment $\sigma: X \times [k] \rightarrow \mathbb{R}^+$, there is a $(3^z + \veps)$-approximate center set in $J$ with respect to $\sigma$.
More specifically, let $\sigma^*$ denote the optimal assignment and let $C^* \equiv (c_1^*, ..., c_k^*)$ denote the optimal $k$ centers that respects $\sigma^*$. 
The property on set $J$ says that there exists $k$ centers $C \equiv (c_1, ...., c_k)$ such that $\sum_i \sum_x \sigma^*(x, i) \cdot D(x, c_i)^z \leq (3^z + \veps) \cdot \sum_i \sum_x \sigma^*(x, i) \cdot D(x, c_i^*)^z$. 
This means that as long as our set $S$ has the property that for any assignment respecting the group constraint, the corresponding assignment cost to any $C \subseteq J, |C|=k$ is about the same as that of the point set $X$, we will have a $3^z$-universal weak coreset for the fair clustering problem as well. 
Here, we note that  we can execute the coreset construction from~\Cref{thm:consS} separately on each group and take a union of the corresponding coresets obtained. This larger set acts a coreset for the labeled dataset. We now formalize these ideas. 
First, we extend the notion of a universal weak coreset to the multi-labeled setting. In the unlabeled version, as assignment of weights to a centers in a set $C$ was characterized by a tuple $\Gamma$ of size $k$. Since we have $m$ labels now, such an assignment needs to be specified for each label. In other words, we now consider tuples $\Gamma$ of length $mk$, i.e., $\Gamma := (t_{1, 1}, ..., t_{1, m}, t_{2, 1}, ..., t_{2, m}, ..., t_{k, 1}, ..., t_{k, m})$, where $t_{i,j}$ is meant to denote the total weight of points with label $j$ assigned to the $i^{th}$ cluster. We can define an assignment $\sigma$ analogously as a map $X \times [k]  \rightarrow \mathbb{R}^+$. We say  that $\sigma$ is consistent with $\Gamma$, i.e., $\sigma \sim \Gamma$ if for every label $j$ and cluster $i$, $\sum_{x \in X_j} \sigma(x,i) = t_{i,j}$. Similarly, for a set of centers $C$, define $\cost_z(X,w,C,\Gamma)$ as 
\[
\cost_z(X, w, C, \Gamma) \equiv \min_{\sigma \sim \Gamma}{ \left\{\sum_{i=1}^{k} \sum_{x \in X} \sigma(x, i) \cdot D(x, c_i)^z \right\}}.
\]
Again, $\opt_z(X,w, \Gamma)$ can be defined as the optimum cost over all choices of centers $C$. Now the definition of a universal coreset $(J,S,v)$ in this setting is analogous to that in~\Cref{def:core} -- we need to satisfy conditions~(A) and~(B).



\begin{theorem}
There is a $(3^z + \veps)$-universal weak coreset of size $\left(m \cdot poly(\frac{k}{\veps}) \right)$ for constrained clustering in the multi-labeled setting.
\end{theorem}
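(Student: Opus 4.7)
The plan is to construct $J$ and $S$ separately, mirroring the unlabeled construction of the previous section. For $J$, we simply take the set produced by \Cref{thm:gjk} applied to $(X, F, w, k)$, of size $poly(k/\veps)$. Property~(A) in the multi-labeled definition is inherited without modification: (A) quantifies only over assignments $\sigma: X \times [k] \to \mathbb{R}^+$, and \Cref{thm:gjk} already provides a $(3^z+\veps)$-approximate center set in $J$ for every such $\sigma$, irrespective of how the points are labeled. All the work is therefore in upgrading $S$ so that (B) holds for the longer tuples $\Gamma$ of length $mk$.

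For the summary set $S$, the plan is to partition $X$ by label into $X_1, \ldots, X_m$, and then invoke \Cref{thm:consS} separately on each sub-instance $(X_j, F, w|_{X_j}, k)$ with the set $J$ fixed above and error probability $\delta = \Theta(1/m)$. This produces, for each label $j$, a weighted set $S_j \subseteq X_j$ of size $poly(k/\veps) \cdot \log |J|$ satisfying, for every $C \subseteq J$ with $|C|=k$ and every per-label tuple $\Gamma_j = (t_{1,j}, \ldots, t_{k,j})$,
\[
\cost_z(X_j, w, C, \Gamma_j) \in (1\pm\veps)\cdot\cost_z(S_j, v_j, C, \Gamma_j).
\]
Define $S := \bigcup_j S_j$ with the obvious weight function $v$. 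Since $|J| = poly(k/\veps)$, we have $\log|J| = O(\log(k/\veps))$, which is absorbed into $poly(k/\veps)$, so $|S| = m \cdot poly(k/\veps)$, as required. A union bound over the $m$ invocations ensures that all per-label guarantees hold simultaneously with high probability.

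The one substantive step, and what I expect to be the main (though mild) obstacle, is to verify that property~(B) in the multi-labeled definition reduces to the per-label guarantees above. Fix a center set $C \subseteq J$ with $|C| = k$ and a tuple $\Gamma = (t_{i,j})_{i \in [k], j \in [m]}$. The crucial observation is that the minimization defining $\cost_z(X, w, C, \Gamma)$ decomposes along labels: every point $x \in X$ belongs to exactly one $X_j$, the per-point constraint $\sum_i \sigma(x,i) = w(x)$ couples only variables attached to that single label, and the marginal constraints $\sum_{x \in X_j} \sigma(x,i) = t_{i,j}$ are label-local. Hence
\[
\cost_z(X, w, C, \Gamma) = \sum_{j=1}^m \cost_z(X_j, w, C, \Gamma_j), \qquad \cost_z(S, v, C, \Gamma) = \sum_{j=1}^m \cost_z(S_j, v_j, C, \Gamma_j),
\]
where $\Gamma_j := (t_{1,j}, \ldots, t_{k,j})$. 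Applying the per-label $(1\pm\veps)$-guarantee to each summand and summing yields $\cost_z(X,w,C,\Gamma) \in (1\pm\veps)\cdot\cost_z(S,v,C,\Gamma)$, which is property~(B). Combined with (A) from $J$, this gives the desired $(3^z+\veps)$-universal weak coreset of size $m \cdot poly(k/\veps)$.
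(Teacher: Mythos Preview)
Your proposal is correct and follows the same approach as the paper: construct $J$ exactly as in the unlabeled case, apply the coreset construction of \Cref{thm:consS} separately to each label class $X_j$, and take the union $S = \bigcup_j S_j$. Your write-up is in fact more complete than the paper's three-sentence proof, spelling out the label-wise decomposition of $\cost_z(X,w,C,\Gamma)$ that justifies property~(B) and the union bound over the $m$ invocations; the only nit is that \Cref{thm:consS} as stated does not expose a $\delta$ parameter, so you should either invoke the underlying result from~\cite{focs22} with $\delta = \Theta(1/m)$ directly or remark that the success probability can be boosted.
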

\begin{proof}
The set $J$ is constructed as in~\Cref{sec:construction}. In order to construct the set $S$, we apply~\Cref{thm:consS} to each of the sets $X_1, \ldots, X_m$ independently to obtain sets $S_1, \ldots, S_m$. Finally, $S := S_1 \cup \ldots \cup S_m$. The desired result follows from the properties of universal coreset. 
\end{proof}

Let us now see why a $(3^z + \veps)$-universal weak coreset can be used for obtaining a $(3^z + \veps)$-approximate solution for multi-labeled constrained clustering problem 
in FPT time (fixed-parameter tractable time). We state the result for the \kmed objective in general metric spaces. Similar results will hold for \kmean in general metric spaces (i.e., $(9+\veps)$-approximation) and $\kmed$ or $\kmean$ objectives in Euclidean spaces (i.e., $(1+\veps)$-approximation).

\begin{theorem}
\label{thm:labelcons}
Let $(J, S,v)$ be a $3$-universal weak coreset for a multi-labeled clustering  instance $(F, X, w, k, \ell)$. Consider an instance $\cI$ of the co  nstrained clustering problem specified by set of pairs 
$\{(\alpha_1, \beta_1), ..., (\alpha_k, \beta_k)\}$. 
Then there is a randomize algorithm $\mathcal{A}$, which on input $\cI$ and $(J,S,v)$ outputs a  $(3+\veps)$-approximate solution with high probability. The running time of $\mathcal{A}$ is $|J|^k \cdot (mk)^{O(mk)} \cdot n^{O(1)}$. 
\end{theorem}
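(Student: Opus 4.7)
The plan is to lift the argument of~\Cref{thm:applicationbalanced} to the multi-label setting. The algorithm $\mathcal{A}$ enumerates every ordered $k$-tuple $C = (c_1, \ldots, c_k) \subseteq J$ (there are at most $|J|^k$ of them), and for each such $C$ computes (approximately) the cheapest fair fractional assignment of $S$ into $C$; it returns the $C$ that minimises this quantity. Correctness and running time then follow by combining Properties~(A) and~(B) of the $3$-universal weak coreset with a suitable way of solving the per-$C$ subproblem.

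For correctness, fix an optimal fair solution $(C^\circ, \sigma^\circ)$ to $\cI$ and define the multi-label target $\Gamma^\circ := (t_{i,j})_{i\in[k], j\in[m]}$ by $t_{i,j} := \sum_{x\in X_j}\sigma^\circ(x,i)$, so $\sigma^\circ \sim \Gamma^\circ$. By Property~(A) of the coreset applied to $\sigma^\circ$, with high probability there is $\hat C\subseteq J$ with $|\hat C|=k$ and $\cost_1(X,w,\hat C, \Gamma^\circ) \le (3+\veps)\,\opt_1(X,w,\Gamma^\circ)$, which is at most $(3+\veps)$ times the cost of $(C^\circ,\sigma^\circ)$. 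Property~(B) gives $\cost_1(X,w,\hat C,\Gamma^\circ) \in (1\pm\veps)\,\cost_1(S,v,\hat C,\Gamma^\circ)$, so as soon as step~2 of $\mathcal{A}$ produces, for each enumerated $C$, a fair assignment on $S$ whose cost is at most $(1+\veps)$ times the best one, composing the errors and rescaling $\veps$ yields the claimed $(3+\veps)$-approximation.

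The heart of the proof is the per-$C$ subproblem: minimise $\sum_i\sum_{s\in S}\sigma'(s,i)\,D(s,c_i)$ subject to $\sum_i\sigma'(s,i)=v(s)$ and the ratio constraint $T_{i,j}\in[\alpha_j T_i, \beta_j T_i]$, where $T_i=\sum_{s\in S}\sigma'(s,i)$ and $T_{i,j}=\sum_{s\in S_j}\sigma'(s,i)$. My plan is to guess the $mk$-dimensional vector $\Gamma' = (T_{i,j})$ on a $(1+\veps)$-geometric grid: because only values in a bounded range can be near-optimal after the coreset reduction, a standard argument bounds the number of relevant grid points by $(mk)^{O(mk)}$. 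Conditional on a guessed $\Gamma'$, the fairness constraints decouple into pure supply/demand conditions, and the residual problem is a min-cost flow routing $v(s)$ units from each $s\in S_j$ to each centre $i$ with demand $t'_{i,j}$, solvable in $n^{O(1)}$ time. Since the coreset preserves cost for \emph{every} $\Gamma$, the guess nearest to the $\Gamma$ induced by the optimal fair assignment on $S$ yields the required approximation.

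The main obstacle is exactly this ratio constraint: unlike the simple per-cluster bounds in the balanced case, $T_{i,j}/T_i\in[\alpha_j,\beta_j]$ couples labels with total cluster mass and does not translate directly into a single-commodity flow. The trick is the linearisation $\alpha_j T_i\le T_{i,j}\le \beta_j T_i$ together with discretisation of $\Gamma'$, which converts the coupled problem into a union of decoupled flow instances, one per grid point. Multiplying the three contributions---$|J|^k$ outer tuples, $(mk)^{O(mk)}$ grid points, and the $n^{O(1)}$ min-cost flow per point---yields the stated running time, while the $(3+\veps)$ factor is inherited entirely from Property~(A) of the coreset.
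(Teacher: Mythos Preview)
Your outline matches the paper's proof: enumerate all ordered $k$-tuples $C\subseteq J$, solve the fair-assignment subproblem for each $C$ on the coreset $(S,v)$, and output the best; correctness is exactly the chain Property~(A) $\Rightarrow$ good $\hat C\in J$ w.r.t.\ $\Gamma^\circ$, Property~(B) $\Rightarrow$ comparable cost on $S$, reverse via Property~(B) for the returned $C'$.

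The only substantive difference is in how the per-$C$ subproblem is handled. The paper does not develop its own solver here; it simply invokes the assignment algorithm of~\cite{bfs21} (their Theorem~8.2), which already delivers the $(mk)^{O(mk)}\cdot n^{O(1)}$ bound as a black box. You instead sketch a direct argument via geometric discretisation of $\Gamma'=(T_{i,j})$ followed by min-cost flow. That is a reasonable route (and presumably close in spirit to what~\cite{bfs21} do), but one sentence in your sketch is off: you justify the discretisation by saying ``since the coreset preserves cost for \emph{every} $\Gamma$''. Property~(B) only relates $\cost_1(X,w,C,\Gamma)$ to $\cost_1(S,v,C,\Gamma)$ for the \emph{same} $\Gamma$; it says nothing about how $\cost_1(S,v,C,\cdot)$ varies between a true $\Gamma^*$ and a nearby grid point $\tilde\Gamma$. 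What you actually need there is a stability/sensitivity argument for the transportation LP (together with care about zero or tiny $T_{i,j}$ and restoring exact feasibility after rounding). This is standard but it is a separate lemma, not a consequence of the coreset guarantee. If you either supply that argument or, as the paper does, cite~\cite{bfs21} for the assignment step, the proof goes through.
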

\begin{proof}
The proof follows the same line as for the unlabeled case. We try all $|J|^k$ possible $k$ centers $(c_1, ..., c_k)$ from $J$ and solve the ``assignment'' problem: find the best fair assignment for the given $(c_1, ..., c_k)$. Our $3$-universal weak coreset guarantees the existence of a $(3+\veps)$-approximate solution within $J$. So, if we can solve the assignment problem optimally, we can find that $(3+\veps)$-approximate solution in $J$. Such an assignment algorithm was given by~\cite{bfs21} (see Theorem~8.2). The running time of this assignment finding algorithm is $(mk)^{O(mk)} \cdot n^{O(1)}$.
\end{proof}

\paragraph{$l$-diversity clustering} Another well known constrained clustering problem in the labelled setting is the  $l$-diversity problem.  Here the goal is to cluster the point set $X$ into clusters $(X_1, ..., X_k)$  such that each cluster has at least $1/l$ fraction of the points from each of the labels. Again, the goal is to minimize the \kmed or \kmean assignment cost. 

As above, we can use the universal weak coreset construction from~\Cref{thm:labelcons} to obtain a $(3^z + \veps)$-approximation algorithm for this problem. Here, we can use  algorithm of~\cite{dx20} to solve the corresponding assignment problem.

\subsection{Discussion and Open Problems}
Classical coresets come with the promise that they will help obtain an approximate solution to the \kmean or \kmed objective in a metric space.  
This promise holds for most known metric spaces. 
However, there are certain  metrics where a specific approximation guarantee cannot be obtained using a classical coreset. 
The reason is that the approximation algorithm that gives that specific approximation guarantee does not work on weighted inputs. Note that a classical coreset is a weighted set.
For example, a recent development~\cite{cdk23} in the \kmed problem in the Ulam metric has broken the $2$-approximation barrier. 
However, their $(2-\delta)$-approximation algorithm works only on unweighted input permutations.
So, the classical coreset framework does not help in this setting. On the other hand, the universal weak coreset framework may still be applicable. The reason is that even though we cannot run the approximation algorithm on the set $S$ to find a good center set, we can use $S$ to locate a good center set from $J$ using the cost preservation property of $S$.
So, an interesting open question is whether there is a $(2-\delta)$-universal weak coreset for the Ulam $k$-median problem. 
In general, in cases where the guarantee of the set $S$ is limited to cost preservation, i.e., $S$ represents the data only in a limited sense, a universal weak coreset is a more appropriate object to use.
It will be interesting to see if there are problems other than the Ulam $k$-median problem with this property.

Note that there are one pass streaming algorithms for constructing the set $S$ because coresets have composability property~\cite{chen09}; and there is a constant-pass streaming algorithm for constructing the set $J$ (the algorithm for constructing $M$ in~\Cref{thm:gjk} can be implemented in streaming settings). Thus both $J$ and $S$ can be constructed in a constant pass streaming setting. We leave it an open problem to design a single-pass streaming algorithm for a universal weak coreset. 

Although we give 3-universal weak coreset constructions of size independent of any function of $n$ for \kmed (and a similar result for \kmean), it remains an open problem to construct an $\alpha$-universal weak coreset, $\alpha < 3$,  with such a guarantee, even for general metric spaces. This will help obtain a better than $3$ approximation algorithm for several constrained $k$-median problems for which the best-known approximation bound is $3$ (similarly a better than $9$-approximation for \kmean).

\bibliographystyle{plain}
{\small \bibliography{references}}

\appendix

\section{Detailed Proofs}
In the main write-up, we outlined proofs of all the results. This section illustrates detailed proof of one of the results,~Theorem~8. The detailed proofs of other results stated in Section~\ref{sec:applications} can be carried out analogously.

\subsection{Proof of Theorem~8}
We restate the theorem below.
\setcounter{theorem}{7}
\begin{theorem}
Let $(J, S,v)$ be a $3$-universal weak coreset for an input instance $(F, X, w, 
  k)$.
Let $\cI = (F,X,w,k,(l_1, ..., l_k), (u_1, ..., u_k))$ be an instance of the balanced \kmed problem.  Then there is a randomized algorithm $\mathcal{A}$, that with high probability, outputs a $k$ center set $C$ that is a $(3+\veps)$-approximate solution for $\cI$. The running time of $\mathcal{A}$ is $\tilde{O}(|J|^k \cdot |S|)$.\footnote{The overall running time of the approximation algorithm, including the time to construct the universal weak coreset is $n \cdot poly(\frac{k}{\veps}) + \left(\frac{k}{\veps}\right)^{O(k)}$.}
\end{theorem}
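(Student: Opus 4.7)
The plan is to flesh out the sketch from the main body into a rigorous argument, with the real content being a careful setup of a reference weight tuple $\Gamma$ and a disciplined invocation of the two universal weak coreset guarantees. First I would fix any optimal solution to $\cI$: a center set $C^\star$ together with a feasible assignment $\sigma^\star: X \times [k] \rightarrow \mathbb{R}^+$ attaining the balanced optimum $\opt(\cI)$. Then I set $\Gamma^\star := (t_1^\star, \ldots, t_k^\star)$ with $t_i^\star := \sum_{x \in X} \sigma^\star(x, i)$. By feasibility of $\sigma^\star$ each $t_i^\star$ lies in $[l_i, u_i]$, and $\sigma^\star \sim \Gamma^\star$ by construction, so $\Gamma^\star$ will be a feasible choice of balance tuple for the flow problem the algorithm later solves.

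Next I would apply property~(A) of the $3$-universal weak coreset to the assignment $\sigma^\star$. Since~(A) is a ``for every $\sigma$'' guarantee before the high-probability event, it applies to this $\cI$-dependent $\sigma^\star$ with no additional overhead, yielding a set $C^\dagger \subseteq J$, $|C^\dagger|=k$, with
\[
\sum_{i}\sum_{x} \sigma^\star(x,i)\, D(x, c_i^\dagger) \;\le\; (3+\veps) \sum_{i}\sum_{x} \sigma^\star(x,i)\, D(x, c_i^\star) \;=\; (3+\veps)\cdot \opt(\cI),
\]
where the last equality uses that the best centers for $\sigma^\star$ are no better than $C^\star$ (which achieves $\opt(\cI)$ with $\sigma^\star$). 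Minimizing the left side over $\sigma \sim \Gamma^\star$ gives $\cost_1(X, w, C^\dagger, \Gamma^\star) \le (3+\veps)\cdot \opt(\cI)$.

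The algorithm $\mathcal{A}$ now enumerates every ordered $k$-tuple $C \subseteq J$ and, for each, computes $\Phi(C) := \min_{\Gamma: t_i \in [l_i,u_i]} \cost_1(S, v, C, \Gamma)$ by solving a single min-cost flow on the bipartite graph between $S$ (supplies $v(x)$) and $C$ (per-center demand in $[l_i, u_i]$), with arc costs $D(x, c_i)$. Let $\widehat{C}$ be the minimizer of $\Phi$, with $\widehat{\Gamma}$ the associated balance tuple read off the flow; output $\widehat{C}$. To bound the cost, I apply property~(B) simultaneously to every $(C, \Gamma)$ touched by the enumeration; this is again built into the definition as a ``for every $C \subseteq J$ and every $\Gamma$'' guarantee inside a single high-probability event. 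Chaining gives
\[
\cost_1(X, w, \widehat{C}, \widehat{\Gamma}) \;\le\; (1+\veps)\,\Phi(\widehat{C}) \;\le\; (1+\veps)\,\Phi(C^\dagger) \;\le\; (1+\veps)\,\cost_1(S, v, C^\dagger, \Gamma^\star) \;\le\; \tfrac{(1+\veps)(3+\veps)}{1-\veps}\cdot\opt(\cI),
\]
where step~1 is property~(B) for $(\widehat{C}, \widehat{\Gamma})$, step~2 is optimality of $\widehat{C}$ in the enumeration, step~3 uses that $\Gamma^\star$ is feasible in the definition of $\Phi(C^\dagger)$, and step~4 combines property~(B) for $(C^\dagger, \Gamma^\star)$ with the bound from property~(A). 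A standard rescaling of $\veps$ by a constant absorbs the factors to give a clean $(3+\veps)$-approximation.

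The main obstacle is essentially bookkeeping rather than a genuine mathematical difficulty: one must verify that~(A) applies to an adversarially chosen $\sigma^\star$ (handled by the quantifier order), that~(B) applies to \emph{all} enumerated pairs $(C, \Gamma)$ simultaneously (handled by the definition which has the quantifier inside the probability), and that both high-probability events hold together (immediate by a union bound, or built in by reading the definition as a joint event). The running-time bound follows because the algorithm performs $O(|J|^k)$ enumeration steps, each dominated by one min-cost flow on $|S|$ supply nodes and $k$ demand nodes, which takes $\tilde{O}(|S|)$ with standard algorithms, matching the claimed $\tilde{O}(|J|^k \cdot |S|)$.
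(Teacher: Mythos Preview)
Your proposal is correct and follows essentially the same approach as the paper: fix the optimal assignment $\sigma^\star$ and its induced tuple $\Gamma^\star$, invoke property~(A) to land a good center set inside $J$, enumerate all $k$-tuples from $J$ solving a min-cost flow on $S$ with bounds $[l_i,u_i]$, and then sandwich the output cost using property~(B) on both the witness $(C^\dagger,\Gamma^\star)$ and the algorithm's output $(\widehat{C},\widehat{\Gamma})$. The only cosmetic difference is that the paper packages the final comparison as a proof by contradiction (its Lemma~2), whereas you write the same inequalities as a single direct chain; your version is slightly cleaner but mathematically identical.
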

\begin{proof}
Consider an optimal solution $\cO$ to the instance $\cI$ -- note that this solution respects the bounds $(l_1, \ldots, l_k)$ and $(u_1, \ldots, u_k)$, i.e., the total weight assigned to the $i^{th}$ cluster lies in the range $[l_i, u_i]$ for all $i \in [k]$. 
Let $\sig$ denote the assignment corresponding to $\cO$, i.e., for a point $x$ and cluster $i$, $\sig(x,i)$ denotes the weight of this point assigned to cluster $i$. Let $\Cstar := (\cstar_1, \ldots, \cstar_k)$ denote the set of $k$ centers in this solution, where $\cstar_i$ denotes the center of cluster $i$ in $\cO$. Let $\opt(\cI)$ denote the cost of the solution $\cO$. 
Let $\Gammas := (\tst_1, ..., \tst_k)$ denote
the actual weight assigned to each of the clusters by $\cO$, i.e., 
$\tst_i = \sum_{x \in X} \sig(x,i)$ for all $i \in [k].$
These quantities satisfy the following properties:
\begin{enumerate}
\item[(I)] $\forall i \in [k], \tst_i \in [l_i, u_i]$.
\item[(II)] $\sig = \arg\min_{\sigma \sim \Gammas}{\cost_1(X, w, \Cstar, \Gammas)}$.
\item[(III)] $\opt(\cI) = \sum_i \sum_x \sig(x, i) \cdot D(x, \cstar_i)= \cost_1(X, w, \Cstar, \Gammas)$
\end{enumerate}
From the $3$-universal weak coreset property, we know that $(J, S, v)$ satisfies the following properties (with high probability): 
\begin{enumerate}
\item[(A)] There are $k$ centers $(c_1, ..., c_k)$ in $J$ such that $\sum_i \sum_{x \in X} \sig(x, i) \cdot D(x, c_i) \leq (3+\veps) \cdot \opt(\cI)$.
\item[(B)] For every choice of $k$ centers $C \equiv (c_1, ..., c_k)$ in $J$ and every $\Gamma$, $\cost_1(X, w, C, \Gamma) \in (1 \pm \veps) \cdot \cost_1(S, v, C, \Gamma)$.
\end{enumerate}
The algorithm for finding an approximate solution for the constrained instance $\cI$ iterates over all subsets of $k$ centers from $J$, and for every such set $(c_1, ..., c_k)$, finds the optimal feasible assignment. 
For a fixed set of $k$ centers $(c_1, ..., c_k)$, an optimal feasible assignment can be found by setting up a min-cost flow network with appropriate lower and upper bounds.
Finally, it outputs the center set with the least cost. The pseudocode for this algorithm is given below.
 \begin{algorithm}[H]
  \caption{Algorithm for balanced clustering using coreset $(J, S, v)$}
  \label{algo:cluster}
    {\bf Input:}  $\cI := (J, S, v, k, (l_1, ..., l_k), (u_1, ..., u_k))$\;
    $mincost = \infty$; $mincenters = \{\}$\;
     \For {every set of $k$ centers $(c_1, ..., c_k)$ from $J$}{
         Set up the appropriate min-cost flow network with lower and upper bounds\;
         Let $c$ denote the optimal cost of the flow network\;
         \If {$mincost < c$}{$mincost = c$; $mincenters = (c_1, ..., c_k)$}
     } 
     {\bf return}($mincenters$)
\end{algorithm}
 Let us see why the center set returned by the above algorithm is a $(3+\veps)$-approximate solution for the balanced clustering instance $\cI$. Let $\Lambda$ denote the set of feasible assignments for this instance, i.e., for any weighted set $B$ (where the weight of a point $x$ is given by $w(x)$), 
 $$ \Lambda_B := \left\{ \sigma: B \times [k] \rightarrow \mathbb{R}^+ \Big|  \sum_{i \in [k]} \sigma(x,i) = w(x) \, \, \forall x \in B, \quad \text{and} \, \sum_{x \in B} \sigma(x,i) \in [l_i, u_i] \, \, \forall i \in [k]\right\}.$$

Given a set of centers, we can define the cost of the optimal feasible assignment as follows. For any center set $C' = (c_1', ..., c_k')$, and weighted set $B$ as above, define
\begin{eqnarray*}
\Psi(B, w, C') &:=& \min_{\sigma \in \Lambda_B} {\psi(B, w, C', \sigma)}, \textrm{ where } \psi(B, w, C', \sigma) = {\left\{ \sum_{i \in [k]
} \sum_{x \in B} \sigma(x, i) \cdot D(x, c_i')\right\}}  
\end{eqnarray*}
In our applications, $B$ will either be the set $X$ with weights $w$, or the set $S$ with weights $v$. 
From the property on set $J$, we know that there is a set of $k$ centers $\bar{C} = (\bar{c}_1, ... , \bar{c}_k)$ in $J$ such that:
\begin{equation}\label{eqn:1}
\psi(X, w, \bar{C}, \sig) = \sum_{i \in [k]} \sum_{x \in X} \sig(x, i) \cdot D(x, \bar{c}_i) \leq (3+\veps) \cdot \opt(\cI).
\end{equation}
We will use the following two lemmas to show the approximation guarantee on the solution returned by the algorithm. 

\begin{lemma}\label{lem:1}
For any set of $k$ centers $C$, any assignment $\sigma \in \Lambda_X$ with $\Gamma := (\sum_x \sigma(x, 1), ..., \sum_x \sigma(x, k))$, $\cost_1(X, w, C, \Gamma) \leq \psi(X, w, C, \sigma)$.
\end{lemma}
\begin{proof}
The proof follows from the definition of functions $\psi$ and $\cost_1$.
\end{proof}

\begin{lemma}
For any center set  $C' = (c_1', ..., c_k')$, $C' \subseteq J$,  if $\Psi(S, v, C') \leq \Psi(X, w, C)$, then $\Psi(X, w, C') \leq (3+15\veps) \cdot \opt(\cI)$.
\end{lemma}
\begin{proof}
We will prove this by contradiction. Assume for the sake of contradiction that there is a $C'$ with $\Psi(S, v, C') \leq \Psi(X, w, \bar{C})$ but $\Psi(X, w, C') > (3+15\veps) \cdot \opt(\cI)$. We have:
\begin{eqnarray}\label{eqn:2}
\Psi(S, v, \bar{C}) &\leq& \cost_1(S, v, \bar{C}, \Gammas) \nonumber\\
&\leq& (1 + \veps) \cdot \cost_1(X, w, \bar{C}, \Gammas)  \quad \textrm{(using property (B) on $\bar{C}$)} \nonumber\\
&\leq& (1 + \veps) \cdot \psi(X, w, \bar{C}, \sig)  \quad \textrm{(Using Lemma~\ref{lem:1})} \nonumber\\
&\leq& (3 + \veps) \cdot (1 + \veps) \cdot \opt(\cI) \quad \textrm{(Using eqn. (\ref{eqn:1}))} \nonumber\\
&\leq& (3+5\veps)\cdot \opt(\cI).
\end{eqnarray}
Let $\sigma' = \arg\min_{\sigma}\psi(S, v, C', \sigma)$ and $\Gamma' = (\sum_x \sigma'(x, 1), ..., \sum_x \sigma'(x, k))$.
Using $\Psi(S, v, \bar{C}) \geq \Psi(S, v, C')$, we get the following:
\begin{eqnarray*}
\Psi(S, v, \bar{C}) &\geq& \Psi(S, v, C') \\
&=& \psi(S, v, C', \sigma') \\
&=& \cost_1(S, v, C', \Gamma') \\
&\geq& \frac{1}{1 + \veps} \cdot \cost_1(X, w, C', \Gamma')  \quad \textrm{(using property (B) on $C'$)}\\
&\geq& \frac{1}{1 + \veps} \cdot \Psi(X, w, C') \\
&\geq& \frac{3 + 15 \veps}{1+\veps} \cdot \opt(\cI) > (3+5\veps)\cdot \opt(\cI).
\end{eqnarray*}
This contradicts eqn. (\ref{eqn:2}).
\end{proof}
The theorem follows from the above lemma since we output the center set $C'$ with the least value of $\Psi(S, v, C')$.\footnote{The $(3+15\veps)$-approximation, instead of $(3+\veps)$-approximation, can be handled by using $\veps' = \veps/15$ instead of $\veps$.} 
\end{proof}

The Proof of Theorem 10 in the multi-labeled setting follows on similar lines as the above proof of Theorem 8.

\end{document}